\definecolor{modification}{rgb}{0, 0, 0}
\NewDocumentCommand\modification{m}{\textcolor{modification}{#1}}
\newtheorem{theorem}{Theorem}
\newtheorem{definition}{Definition}
\newtheorem{lemma}{Lemma}
\title{\vspace{-1cm}Gleason's Theorem for a Qubit as Part of a Composite System}
\author{Vincenzo Fiorentino and Stefan Weigert\\
\small vincenzo.fiorentino@york.ac.uk, stefan.weigert@york.ac.uk \\}
\date{%
    \normalsize  Department of Mathematics, University of York, York YO10 5GH, United Kingdom\\[2ex]%
    %\today 
    June 2026\protect\footnote{Originally uploaded 19 November 2025; this is the version published as \href{https://doi.org/10.1103/8dyw-6ktq}{Phys. Rev. A 113 062221 (2026)}.} 
}
\begin{document}

\maketitle

\begin{abstract}
We extend Gleason's theorem to the two-dimensional Hilbert space of a qubit by invoking the standard axiom that describes \textit{composite} quantum systems. 
The tensor-product structure allows us to derive density matrices and Born's rule for $d=2$ from a simple requirement: the probabilities assigned to measurement outcomes must not depend on whether a system is considered on its own or as a subsystem of a larger one. 
In line with Gleason's original theorem, our approach assigns probabilities only to \textit{projection-valued measures}, while other known extensions rely on considering more general classes of measurements. \modification{This extension of Gleason's theorem to two-dimensional systems is shown to remain valid for some foil theories of quantum theory.} 
\end{abstract}
%\tableofcontents
\section{Introduction}

Gleason's theorem \cite{gleason1957measures} derives both the density matrix formalism for quantum states and Born's rule for calculating outcome probabilities.\textcolor{modification}{\footnote{\textcolor{modification}{The ``derivation of Born's rule" here refers to the trace rule, a mathematical prescription that generates outcome probabilities or expectations from quantum states. From a more physical perspective, Gleason's theorem establishes the Born rule as the link between the quantum formalism and empirical predictions.}}}
%In accordance with standard usage in the literature, here ``derives'' is meant in a mathematical sense: Born's rule is the \textit{unique} consistent assignment of probabilities and follows necessarily from the assumptions. From a physical perspective, the theorem confirms the consistency of the quantum formalism, where Born’s rule provides the link between abstract operators and empirical predictions.
 The result is based on assigning non-contextual probabilities to measurement outcomes that are associated with \textit{projection-valued measures} (PVMs). 

Importantly, Gleason's argument does not hold for a single qubit, i.e.\ the quantum system associated with a Hilbert space of dimension two. From a conceptual point of view, this situation is not satisfactory since the qubit represents the building block of quantum information processing and the currently emerging quantum technologies \cite{nielsen2010quantum, gill2025quantum}.

The proof of Gleason's theorem is based on consistency conditions that arise only if projection operators appear in multiple resolutions of the identity. The structural simplicity of projective  measurements in the \textit{two-dimensional} Hilbert space of a qubit does not allow for the required ``intertwining" of projections. Therefore, Gleason's theorem cannot be used to rule out probability assignments that, for example, attribute definite outcomes to measurements of two or more non-commuting observables. Such assignments are incompatible with quantum theory. In fact, no qubit state $\ket \psi $ with $\langle \psi| \sigma_x| \psi \rangle = \langle \psi| \sigma_z | \psi \rangle = 1$ exists, where $\sigma_x$ and $\sigma_z$ are the Pauli observables proportional to the $x$- and the $z$-component of a spin $1/2$, respectively.

We will extend Gleason's theorem to a two-dimensional Hilbert space by assuming that probability assignments to measurement outcomes do not depend on whether a system such as a qubit is considered on its own or as a subsystem of a larger one. In other words, probability assignments must be defined consistently for \textit{composite} quantum systems. The main advantage of this approach is that we only invoke another standard postulate of quantum theory instead of enlarging the set of measurements to which probabilities are assigned. 

\modification{So-called \textit{Gleason-type} theorems that apply to a two-dimensional Hilbert space are based on probability assignments to the outcomes associated with sets of measurements \textit{larger} than projective ones. Examples of such sets are the elements of \textit{positive operator-valued measurements} (POVMs) \cite{busch2003quantum, caves2004gleason} or \textit{projective-simulable} measurements \cite{wright2019gleason}. In contrast, our derivation only uses \textit{projective} measurements, just as the original theorem from 1957. The main idea of both our approach---to exploit constraints on local measurements when carried out on composite systems featuring a qubit---and the result are natural from a physicist's point of view (see Sec.~3.5 of \cite{rau2021quantum}).}

Consistently assigning outcome probabilities is a physically motivated assumption, in contrast to the rather abstract definition of quantum states as density operators that act on a specific Hilbert space. In this sense, Gleason's theorem can be understood as an early contribution to the \textit{reconstruction programme} of quantum theory that aims to derive quantum theory from intuitive postulates \cite{hardy_quantum_2001, grinbaum2007reconstruction, chiribella_informational_2011,masanes_derivation_2011, jaeger2019information}. Clarifying the logical relationships among the standard postulates and possibly reducing their number also helps to identify the core of quantum theory \cite{fiorentino_beyond_2025, wilson_origin_2023, erba_composition_2024, auffeves2020deriving, auffeves2022revisiting}. Similarly, the proposed extension of Gleason's theorem represents a welcome conceptual simplification since we show that nothing but the properties of composite quantum systems feed into it. 

In Sec.\ \ref{Gleason thm}, we briefly recall Gleason's theorem and introduce suitable notation. The argument leading to Thm.\ \ref{thm: our GTT}, our main result, is presented in Sec.\ \ref{Sec: qubitGleason}, building on the description of measurements in com\-po\-site quantum systems. \modification{In Sec.\ \ref{implications}, we examine whether our derivation is ``structurally stable'': does  it also apply to \textit{foil theories} of quantum theory characterized by real or quaternionic Hilbert spaces, by alternative tensor products or state-update rules?} We conclude with a summary \modification{(Sec.~\ref{summary})} and place our results in a wider context (Sec.~\ref{sum&disc}).

\section{Gleason's theorem} \label{Gleason thm}
Gleason's theorem relies on three assumptions. The first two of them set the scene by (H) associating a finite-dimensional Hilbert space with each quantum system, and by (M) defining what measurements are.

\begin{itemize}
    \item[(H)] To each quantum system, there corresponds a complex \textit{Hilbert space} $\mathcal{H}_d$ with finite dimension $d \geq 2$.
    \item[(M)] A \textit{measurement} $M$ on a quantum system is described by a projection-valued measure (PVM), i.e.\ a set $M=\{ P_x \}_x$ of projectors $P_x\in \mathcal{P}(\mathcal{H}_d) \equiv \{ P \in \mathbb{C}^{d\times d} \, | \, P^2=P, \, P^{\dagger}=P \}$, that are mutually orthogonal ($P_xP_y=P_x \delta_{xy}$) and sum to the identity operator, $\sum_x P_x=I_d$. Each projector $P_x$ corresponds to a possible outcome $x$ of a measurement $M$, and the set of all measurements on $\mathcal{H}_d$ will be denoted by $\mathsf{M}_d$.
\end{itemize}

The third assumption (S) introduces quantum states in an operational way: states are considered to be in one-to-one correspondence with non-contextual probability assignments to the outcomes of all measurements that can be performed on a given system. The concept of a \emph{frame function respecting the measurement set} $\mathsf{M}_d$ is crucial here: it is a map $f:\mathcal{P}(\mathcal{H}_d) \to [0,1]$ that assigns probabilities to the projectors of any measurement $M \in \mathsf{M}_d$ in such a way that
\begin{equation}
    \sum_{P_x \in M} f(P_x) = 1 \, 
\end{equation}
holds. Operationally, the number $f(P_x)$ represents the probability that the measurement $M$ outputs $P_x \in M$. The outcome probabilities are \textit{non-contextual} in the sense that for any two different measurements $M$ and $M^{\prime}$ containing the outcome $P_x$, the probability $f(P_x)$ is assigned the same value.
In this terminology, Gleason's third assumption reads as follows. 
\begin{itemize}
    \item[(S)] The \textit{states} of a $d$-dimensional quantum system correspond to the frame functions $f$ that respect the measurement set $\mathsf{M}_d$.
\end{itemize}
The set of frame functions respecting the measurement set $\mathsf{M}_d$ will be denoted by $\mathcal{F}_d$.
It corresponds to all \textit{distinguishable} preparations of a $d$-dimensional quantum system, i.e.\ those that assign different outcome statistics to at least one measurement $M \in \mathsf{M}_d$. 

Gleason's theorem establishes a one-to-one correspondence between $\mathcal{F}_d$ and the set of density matrices $\mathcal{S}(\mathcal{H}_d)$ for dimension $d \geq 3$.
\begin{theorem}[Gleason's theorem] \label{thm: GT}
    Assume (H), (M) and (S). Then, for $d \geq 3$, any frame function $f \in \mathcal{F}_d$ admits an expression
    \begin{equation} \label{eq: Born rule frame function}
        f(P_x) = \mathrm{Tr}(P_x \, \rho)  \quad \text{for all } \, P_x \in \mathcal{P}(\mathcal{H}_d) \, ,
    \end{equation}
    where $\rho \in \mathcal{S}(\mathcal{H}_d)$ is a \textrm{density matrix}, i.e.\ a $d$-dimensional, non-negative (hence Hermitian) operator with unit trace.
\end{theorem}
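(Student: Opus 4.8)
The plan is to reproduce the logic of Gleason's original argument, which proceeds in three stages: reduce the global problem to one about real three-dimensional subspaces, solve it there by harmonic analysis on the two-sphere, and glue the local solutions into a single operator $\rho$ on $\mathcal{H}_d$.

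First I would note that a frame function $f\in\mathcal{F}_d$ is finitely additive on any family of mutually orthogonal projectors whose sum is again a projector: one completes such a family to a full PVM in $\mathsf{M}_d$ and invokes $\sum_{P_x\in M}f(P_x)=1$. Consequently $f$ is determined by its values on rank-one projectors, and since $f(|u\rangle\langle u|)$ depends only on the ray through $u$, it defines a map $g$ on the unit sphere of $\mathcal{H}_d$ with $g(u)\in[0,1]$ and $\sum_k g(e_k)=1$ for every orthonormal basis $\{e_k\}$. Fixing a three-dimensional subspace of $\mathcal{H}_d$ and, inside it, a real Euclidean $\mathbb{R}^3$ spanned by an orthonormal basis — which requires $d\geq 3$, the point at which Gleason's argument breaks down for a qubit — the restriction of $g$ to the unit sphere $S^2\subset\mathbb{R}^3$ is a nonnegative frame function of weight one in the classical sense.

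The heart of the proof is the assertion that every such frame function on $S^2$ is the restriction of a quadratic form $u\mapsto\langle u|A|u\rangle$ with $A$ a symmetric operator. I would establish this in two steps. The analytic step is a regularity argument: a bounded frame function on $S^2$ is automatically continuous, which one proves by controlling the oscillation of $g$ on small spherical caps through repeated use of the frame constraint along suitably rotated orthonormal triads, yielding a bound on $|g(u)-g(u')|$ in terms of the angle between $u$ and $u'$. The harmonic-analytic step then expands the continuous function $g$ in spherical harmonics and uses the frame condition — averaged over the rotation group — to show that all components of degree $\ell\neq 0,2$ vanish, so that $g$ is the sum of a constant and a degree-two harmonic, i.e.\ a quadratic form. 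I expect this regularity argument to be the main obstacle; it is the genuinely difficult part of Gleason's proof and admits no shortcut at the level of generality needed here.

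Finally I would patch the local quadratic forms together. Each real three-dimensional subspace $V$ yields a symmetric form $A_V$ with $g(u)=\langle u|A_V|u\rangle$ on $V$; on a two-dimensional intersection the two forms agree modulo the harmless ambiguity $A\mapsto A+cI$, which can be normalised away, so they assemble into a single real-bilinear, and upon complexification Hermitian, sesquilinear form on $\mathcal{H}_d$ — that is, a Hermitian operator $\rho$ with $g(u)=\langle u|\rho|u\rangle$. Additivity then gives $f(P_x)=\mathrm{Tr}(P_x\rho)$ for every projector; $f\geq 0$ forces $\rho\geq 0$; and applying $\sum_x f(P_x)=1$ to a rank-one PVM labelled by an orthonormal basis yields $\mathrm{Tr}\,\rho=1$. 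Hence $\rho\in\mathcal{S}(\mathcal{H}_d)$, as claimed.
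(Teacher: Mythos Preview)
Your outline faithfully sketches the classical Gleason argument --- reduction to rank-one projectors, the hard regularity step on $S^2$, identification with a quadratic form via spherical harmonics, and gluing to a global Hermitian operator --- and you are candid that the continuity step is where the real work lies. As a plan this is sound.

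However, the paper does not prove this theorem at all. Theorem~\ref{thm: GT} is stated with attribution to Gleason's original 1957 paper and used as a black box: it is invoked in the proof of Lemma~\ref{lem: d>2} (to conclude that every $f\in\mathcal{F}_d$ with $d\geq 3$ comes from a density matrix) and in the proof of Theorem~\ref{thm: our GTT} (applied to the composite space $\mathcal{H}^{AB}_{2d'}$ of dimension $2d'\geq 4$). So there is nothing to compare your approach against; the paper's ``proof'' is simply a citation. Your sketch is consistent with Gleason's own method, which is what the citation points to, so in that sense you are aligned with the paper --- but you are supplying content the paper deliberately omits.
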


In the two-dimensional Hilbert space $\mathcal{H}_2$, any non-negative operator with unit trace gives rise to a frame function $f$ on the set of qubit measurements $\mathsf{M}_2$, just as in any larger dimension $d\geq 3$. There are, however, probability assignments that do \textit{not} arise through a relation of the form \eqref{eq: Born rule frame function}. When $d = 2$, each projector $P_x \in \mathcal{P}(\mathcal{H}_2)$ appears only in a single measurement, $M=\{ P_x, I_2-P_x \}$. This lack of ``intertwining'' means that the values $f(P_x)$ and $f(I_2-P_x)$ can be chosen independently of all other measurements. Consequently, \textit{any} probability assignment to each pair of projectors forming the measurements $M\in \mathsf{M}_2$ defines a valid frame function for a qubit.

The properties of measurements in \textit{composite} systems will allow us, in a natural way, to rule out probability assignments that do not arise from density matrices. It is instructive to informally describe the indirect relation between Gleason's theorem and the state space of a single qubit that we will exploit. We will consider an even composite dimension to which the theorem applies, $d=4$, say. In this case, we work with the Hilbert space $\mathcal{H}_4$ used to describe a pair of qubits. Here is a family of measurements $M_{\psi}$ on $\mathcal{H}_4$ consisting of three projection operators each,
\begin{equation} \label{threeprojectors}
    M_\psi=\{ |0\rangle \langle0|\otimes I_2,\,
    |1\rangle \langle 1|\otimes |\psi \rangle \langle \psi|, \,
    |1\rangle \langle 1|\otimes |\psi^\perp \rangle \langle \psi^\perp| \}
\end{equation}
where $\ket{\psi}$ and $\ket{\psi^\perp}$ are any two orthogonal two-component vectors. The projection operator $\Pi = |0\rangle \langle0|\otimes I_2$ is, of course, closely related to the projector $\uppi = |0\rangle \langle0| \in \mathcal{P}(\mathcal{H}_2)$ associated with a measurement outcome of a \textit{single} qubit. The fact that each of the measurements $M_\psi$ contains the operator $\Pi$ means that they are intertwined. Thus, there exist constraints on assigning a probability to the outcome associated with $\Pi$, in contrast to the single-qubit projector $\uppi$ which occurs only in one measurement, namely $M=\{ |0\rangle \langle0| , |1\rangle \langle 1| \}$. It remains to show that the properties of measurements in composite systems induce constraints that limit the allowed outcome probabilities for measurements carried out on a subsystem.

\section{Gleason's theorem for a qubit} \label{Sec: qubitGleason}

To derive our main result, Thm.\ \ref{thm: our GTT}, we begin by recalling the description of composite quantum systems. Next, we justify the requirement that any measurement on a subsystem must have a unique representation within the set of measurements on the composite system. The freedom of an observer to regard any single system as a component of a larger, composite system (or not) implies a consistency condition on valid probability assignments to measurement outcomes. This constraint entails an operational definition of quantum states that is more restrictive than the one used in Gleason's theorem. Nevertheless, it is sufficient to extend the theorem to qubits and does not require assumptions for probability assignments to measurements other than projective ones.

\subsection{Composite quantum systems} \label{compositesystems}

Independent physical systems may be combined to form a single, \textit{composite} system. A theory describing the dynamics of individual  systems or their behaviour when performing measurements must also be capable of describing the properties of composite systems. Given measurement sets $\mathsf{M}^A$ and $\mathsf{M}^B$ for individual systems $A$ and $B$, the theory must define a composite measurement set $\mathsf{M}^{AB}$ for the joint system $AB$. In our context, this construction must  include a set of frame functions $\mathcal{F}^{AB}$ associated with the possible preparations of the composite system.

In quantum theory, the composition of systems employs the  tensor product. It will be sufficient to consider bipartite systems, i.e.\ composite systems built from only two parts.
\begin{itemize}
    \item[(C)] To each bipartite quantum system $AB$, there corresponds a complex Hilbert space that is obtained by tensoring the Hilbert spaces $\mathcal{H}^A$ and $\mathcal{H}^B$ of its constituents $A$ and $B$, respectively, $\mathcal{H}^{AB}=\mathcal{H}^{A}\otimes \mathcal{H}^{B}$.
\end{itemize}

Consequently, the Hilbert space representing a single system with dimension $dd^{\prime}$ is isomorphic to the Hilbert space of composite systems consisting of two parts of dimensions $d$ and $d^{\prime}$, respectively, $\mathcal{H}_{d {d^{\prime}}} \cong \mathcal{H}_{d } \otimes \mathcal{H}_{ d^{\prime}}$. In addition, there is no difference between the sets of measurements defined on these isomorphic spaces. The product structure also does not affect the definition of frame functions: Gleason's theorem is valid for \textit{any} Hilbert space of dimension larger than two, including those that describe composite systems.

\subsection{Subsystem measurements}

From an operational point of view, the defining difference between a composite and a non-composite (quantum) system is the possibility of carrying out operations on its constituents.\footnote{As put by Barrett in \cite{barrett2007information}: ``For each of the [\dots] measurements [\dots] of system $B$, there must be at least one operation on the joint system $AB$ that corresponds to performing that measurement.''} This property means that the set $\mathsf{M}^{AB}$ must contain elements that may be interpreted as representing measurements performed only on a subsystem. In other words, one ``needs to know how to represent the [measurements on] the small system as [measurements on] the larger system'' \cite{masanes2019measurement}. The basis of our argument will be that there must be a consistent description of any measurement $M^A \in \mathsf{M}^A$ carried out by an observer on a system $A$ which may, or may not, be part of a larger composite system $AB$. 

To satisfy this requirement, we need to represent the operation of \textit{measuring $M^A$ on subsystem $A$, and nothing on subsystem $B$ of the composite system AB}.\footnote{The requirement that all measurement on a single system are uniquely represented within the set of measurements on a composite system is a universal property of generalised probabilistic theories \cite{plavala2023general}.} Mathematically, we have to (injectively) embed the measurements of the set $\mathsf{M}^A$ in the set $\mathsf{M}^{AB}$ using a map  $\Phi :\mathsf{M}^A \to \mathsf{M}^{AB}$, say. Clearly, when representing the measurement $M^A \in \mathsf{M}^A$ as a measurement $\Phi(M^A)$ on the composite system $AB$, the number of possible outcomes must not change. It will be convenient to reuse the symbol $\Phi$ when extending the map from measurements to measurement outcomes. The set of projections on system $A$ should be mapped to projections on the larger system $AB$: $\Phi:\mathcal{P}(\mathcal{H}^A) \to \mathcal{P}(\mathcal{H}^{AB})$.
Accordingly, $\Phi(P_x^A)$ will describe the operation of \textit{measuring subsystem $A$ and obtaining the outcome represented by $P_x^A \in \mathcal{P}(\mathcal{H}^A)$, while measuring nothing on subsystem $B$}.

In quantum theory, the map $\Phi$ can be made explicit by invoking \textit{composition compatibility} (CC), which formalises the standard way to describe local measurements on multi-partite systems. 
\begin{itemize}
    \item[(CC)] The measurement $M^A=\{P_x^A\} \in \mathsf{M}^A$ on system $A$ being part of a composite system $AB$, is represented in the set $\mathsf{M}^{AB}$ by the measurement $M^{AB}=\{ P_x^A \otimes I^B  \}$, where $I^B$ is the identity operator on $\mathcal{H}^B$.
\end{itemize}
It is reasonable to assume a map of the form \begin{equation} \label{CC in math form}
    \Phi(P_x^A) = P_x^A \otimes I_d^B  \, ,
\end{equation}
because $\{I_d^B\}$ represents the ``trivial measurement'' on the Hilbert space  $\mathcal{H}_d^B$ of a $d$-dimensional system $B$. It reveals no information about the preparation of the system $B$ since $f(I^B_d)=1$ for all $f\in \mathcal{F}_{d}^B$. 
Thus, Eq.\ \eqref{CC in math form}, i.e.\ property (CC), formalises the assumption that measurements on $A$ alone \textit{cannot} yield information about subsystem $B$.\footnote{The form of the map $\Phi$ in Eq.\ \eqref{CC in math form} generalises beyond quantum theory. In any generalised probabilistic theory, given an effect $e^A$ of system $A$ and the unit effect $u^B$ of system $B$ (where $u^B(s)=1$ for all normalised states $s$ of $B$), the product $e^A \otimes u^B$ is the effect of the composite system $AB$ that represents a local measurement of $e^A$ on $A$ \cite{plavala2023general}. Such product effects are included in the theory regardless of the particular tensor product chosen to describe system composition.}

Having identified the elements of the set $\mathsf{M}^{AB}$ that correspond to measurements on a subsystem, we exploit the fact that each preparation of a composite system $AB$ determines a unique preparation of its subsystems. Let us denote frame functions of the composite system by $F\in \mathcal{F}^{AB}$. Then, each $F$ will induce a frame function $\phi_F \in \mathcal{F}^A$ of the subsystem $A$ when restricted to the subset $\Phi (\mathcal{P}(\mathcal{H}^A))\subset \mathcal{P}(\mathcal{H}^{AB})$, or explicitly\footnote{The relation \eqref{eq: gamma fcn (arbitrary g)} is standard when considering frame functions for composite quantum systems; see, e.g., \cite{dvurecenskij1993gleason}.},
\begin{equation} \label{eq: gamma fcn (arbitrary g)}
    \phi_F(P^A_x)=F(\Phi(P^A_x)) = F(P_x^A \otimes I^B) \qquad\text{for all } P_x^A \in \mathcal{P}(\mathcal{H}^A)\, .
\end{equation}
Clearly, $\phi_F$ is a frame function for system $A$ since 
\begin{equation} \label{eq: induced ff is a ff}
    \sum_{P_x^A \in M^A} \phi_F(P_x^A) =
    \sum_{P_x^A \in M^A} F(P_x^A \otimes I^B) =1 \quad \text{for all }M^A \in \mathsf{M}^A \, .
\end{equation}

Therefore, the map $\phi: F \mapsto \phi_F$ relates each preparation of $AB$ to a unique \textit{induced preparation} of $A$. If such a function $\phi$ did not exist, it would not make sense to consider $AB$ as a ``composite'' system containing $A$. Generally, the mapping $\phi$ will be surjective (every preparation of $A$ not considered as a subsystem of $AB$ arises from some preparation of $AB$ via $\phi$) but not injective (distinct preparations of $AB$ can yield identical preparations of $A$).

\subsection{Marginal frame functions}\label{MarginalFramFunctions}

Measurements in composite systems are thus seen to necessarily satisfy a consistency requirement. All \textit{ valid states} of system $A$ must correspond to frame functions $f \in \mathcal{F}^A$ that follow from preparations of composite systems containing $A$ as a subsystem:  after combining $A$  with any other system $B$, it must be possible to generate any subsystem frame function $f$ from a frame function $F \in \mathcal{F}^{AB}$ through a map $f= \phi_F $. This property encodes the operational principle that \textit{any single system can be regarded, without affecting the predictions of the theory, as part of a larger composite system}. Frame functions satisfying this property will be called \textit{marginal}. The following definition uses Postulate (C) and incorporates assumption (CC).\footnote{Conditions analogous to Eq.\ \eqref{cons eq with CC} in Def.\ \ref{def: CC frame function} have appeared as consistency requirements for measurement statistics in operational theories with composite systems; see, e.g., \cite{barrett2007information, masanes2019measurement, plavala2023general}.}

\begin{definition} \label{def: CC frame function}
A frame function $f \in \mathcal{F}_d^A$ is called \textit{marginal} if, for all 
 systems $B$ with Hilbert space $\mathcal{H}_{d^\prime}^B$ of finite dimension $d^{\prime} \geq 2$, there exists a frame function $F \in \mathcal{F}_{dd^{\prime}}^{AB}$ for the composite system $\mathcal{H}_{d}^A \otimes \mathcal{H}_{d^\prime}^B$ such that $f=\phi_F$, or explicitly,
\begin{equation} \label{cons eq with CC}
    f(P_x^A) = F (P_x^A \otimes I^B_{d^\prime}) \quad\text{for all } P_x^A \in \mathcal{P}(\mathcal{H}_d^A) \, .
\end{equation}
\end{definition}
The set of marginal frame functions respecting the measurements $\mathsf{M}_d$  associated with system $A$ will be denoted by $\widetilde{\mathcal{F}}_{d}^A$. 

Marginality of frame functions provides an operational definition of quantum states that is stronger than the assumption ($\mathrm{S}$) used in Gleason's theorem (cf.\ Sec.\ \ref{Gleason thm}).
\begin{itemize}
    \item[($\widetilde{\mathrm{S}}$)] The \textit{states} of a $d$-dimensional system correspond to \textit{marginal} frame functions $f\in \widetilde{\mathcal{F}}_d$ that respect the measurement set $\mathsf{M}_d$.
\end{itemize}

Using assumption ($\widetilde{\mathrm{S}}$), we will identify the \textit{partial trace} as the unique mapping $\phi$ that generates frame functions for subsystems. It then follows that we can extend Gleason's theorem to two-dimensional quantum systems without modifying postulate ($\mathrm{M}$).

Let us show now that,  for quantum systems with dimensions $d \geq 3$,  marginal frame functions in ($\widetilde{\mathrm{S}}$) give rise to the same set of quantum states as the original frame functions in ($\mathrm{S}$). 
\begin{lemma} \label{lem: d>2}
    All frame functions for systems of dimension $d \geq 3$ are marginal, i.e.\ $\widetilde{\mathcal{F}}_d = \mathcal{F}_d$.
\end{lemma}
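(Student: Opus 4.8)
The plan is to prove the two inclusions separately, with the content residing entirely in $\mathcal{F}_d \subseteq \widetilde{\mathcal{F}}_d$. The reverse inclusion $\widetilde{\mathcal{F}}_d \subseteq \mathcal{F}_d$ is immediate from Def.~\ref{def: CC frame function}: a marginal frame function is, by definition, a frame function respecting $\mathsf{M}_d$.

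For the non-trivial inclusion, I would start from an arbitrary $f \in \mathcal{F}_d$ with $d \geq 3$ and invoke Gleason's theorem (Thm.~\ref{thm: GT}) — this is precisely the step that uses the hypothesis $d \geq 3$ — to write $f(P_x^A) = \mathrm{Tr}(P_x^A \, \rho)$ for some density matrix $\rho \in \mathcal{S}(\mathcal{H}_d^A)$. Now fix any system $B$ with Hilbert space $\mathcal{H}_{d'}^B$, $d' \geq 2$, pick an arbitrary density matrix $\tau \in \mathcal{S}(\mathcal{H}_{d'}^B)$ (for definiteness $\tau = I_{d'}^B / d'$), and define $F : \mathcal{P}(\mathcal{H}_{dd'}^{AB}) \to [0,1]$ by $F(Q) = \mathrm{Tr}\big(Q \,(\rho \otimes \tau)\big)$. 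Since $\rho \otimes \tau$ is a non-negative operator of unit trace on $\mathcal{H}_d^A \otimes \mathcal{H}_{d'}^B \cong \mathcal{H}_{dd'}$, the Born-rule expression $F$ is a legitimate frame function in $\mathcal{F}_{dd'}^{AB}$ — here one only needs the elementary ``easy direction'' that every density matrix induces a frame function, which holds in every dimension (including composite ones), as already noted in Sec.~\ref{compositesystems}. A short computation using $\mathrm{Tr}(P_x^A \otimes I_{d'}^B)(\rho \otimes \tau) = \mathrm{Tr}(P_x^A \rho)\,\mathrm{Tr}(\tau) = \mathrm{Tr}(P_x^A \rho)$ then gives $F(P_x^A \otimes I_{d'}^B) = f(P_x^A)$ for all $P_x^A \in \mathcal{P}(\mathcal{H}_d^A)$, i.e.\ $f = \phi_F$ in the sense of Eq.~\eqref{cons eq with CC}. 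As $B$ was arbitrary, $f$ is marginal, so $\mathcal{F}_d \subseteq \widetilde{\mathcal{F}}_d$, and combined with the reverse inclusion this yields $\widetilde{\mathcal{F}}_d = \mathcal{F}_d$.

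There is no serious obstacle here; the argument is essentially a one-line construction of a product extension $\rho \otimes \tau$. The only points requiring a modicum of care are (i) flagging that the density-matrix form of $f$ is available only because $d \geq 3$, so that Thm.~\ref{thm: GT} applies, and (ii) confirming that $F$ genuinely lies in $\mathcal{F}_{dd'}^{AB}$, which reduces to the trivial verification that $\rho \otimes \tau \succeq 0$ and $\mathrm{Tr}(\rho \otimes \tau) = 1$. It is worth remarking in passing that the construction is highly non-unique — any choice of $\tau$ works — which foreshadows the non-injectivity of the map $\phi$ discussed after Eq.~\eqref{eq: induced ff is a ff}, and explains why the genuinely restrictive force of marginality only appears in the excluded case $d = 2$.
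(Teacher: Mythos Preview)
Your proof is correct and follows essentially the same strategy as the paper: apply Gleason's theorem on $A$ to obtain a density matrix $\rho$, then exhibit a density matrix on $AB$ whose marginal reproduces $f$. The only minor difference is presentational---the paper also applies Gleason on the composite system and phrases the existence of $\rho_F$ via surjectivity of the partial trace, whereas you give the explicit product-state construction $\rho\otimes\tau$ and invoke only the easy direction on $AB$; your version is slightly more economical for this lemma, while the paper's formulation singles out the partial trace for use in the proof of Thm.~\ref{thm: our GTT}.
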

\begin{proof}
    Since $d>2$, Gleason's Theorem implies that all valid outcome probability distributions of an isolated $d$-dimensional system can be represented by means of the trace over a suitable density matrix $\rho_f \in \mathcal{S}(\mathcal{H}_d^A)$, i.e.\ $f(P_x^A) = \text{Tr}(P_x^A \rho_f)$. Similarly, since $dd^{\prime}>2$ for any $d^{\prime}\geq 2$, we have that for each $F \in \mathcal{F}_{dd^{\prime}}^{AB}$ there exists a density matrix $\rho_F \in \mathcal{S}(\mathcal{H}_{dd^{\prime}}^{AB})$ such that
    $F(P_y^{AB}) = \text{Tr}(P_y^{AB} \rho_F)$ for all $P_y^{AB} \in \mathcal{P}(\mathcal{H}_{dd^{\prime}}^{AB})$.
Therefore, the marginality condition \eqref{cons eq with CC} reads 
\begin{equation} \label{equal traces}
        \text{Tr}(P_x^A \rho_f) = \text{Tr}[(P_x^A \otimes I^B_{d^{\prime}} ) \, \rho_F] \,,  
    \end{equation}
and it is satisfied for all $P_x^A \in \mathcal{P}(\mathcal{H}^A_d)$ if and only if $\text{Tr}_B (\rho_F) = \rho_f$. Thus, the mapping $\phi$ is uniquely given by the (surjective but not injective) partial trace over subsystem $B$, $\phi=\text{Tr}_B: \mathcal{S}(\mathcal{H}^{AB}) \to \mathcal{S}(\mathcal{H}^A)$. For any space $ \mathcal{H}_{d^{\prime}}^B$ and any $\rho_f \in \mathcal{S}(\mathcal{H}_d^A)$, there exists some $\rho_F \in \mathcal{S}(\mathcal{H}_{dd^{\prime}}^{AB})$ satisfying $\text{Tr}[(P_x^A \otimes I^B) \,\rho_F] = \text{Tr}(P_x^A \rho_f)$. It follows that for $d\geq 3$, all frame functions are marginal, i.e.\ $\widetilde{\mathcal{F}}_d= \mathcal{F}_d$.
\end{proof}

As already illustrated, there exist probability assignments to projections in the space $\mathcal{H}_2$ that do not correspond to valid density matrices for a qubit. We will now establish a one-to-one correspondence between the subset of \textit{marginal} frame functions and the set of two-dimensional density matrices $\mathcal{S}(\mathcal{H}_2)$.\footnote{In other words, the set of marginal qubit frame functions is a proper subset of all qubit frame functions, $\protect\widetilde{\mathcal{F}}_2 \subset \mathcal{F}_2$.} This is the content of  Theorem \ref{thm: our GTT}, which provides an extension of Gleason's theorem (Thm.\ \ref{thm: GT}), building on the set of marginal frame functions $\widetilde{\mathcal{F}}_d$ for composite systems, defined in postulate ($\widetilde{\mathrm{S}}$). It represents the main result of the paper.

\begin{theorem} \label{thm: our GTT}
    Assume (H), (M), (C) and ($\widetilde{S}$). Then, for $d \geq 2$, any marginal frame function $f \in \widetilde{\mathcal{F}}_d$ admits an expression 
\begin{equation}
        f(P_x) = \mathrm{Tr}(P_x \, \rho) \quad \text{for all } P_x \in \mathcal{P}(\mathcal{H}_d)\, ,
    \end{equation}
where $\rho \in \mathcal{S}(\mathcal{H}_d)$ is a $d$-dimensional density matrix. In particular, $\widetilde{\mathcal{F}}_d \cong \mathcal{S}(\mathcal{H}_d)$.
\end{theorem}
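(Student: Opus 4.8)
The plan is to treat the cases $d\geq 3$ and $d=2$ separately, since Lemma~\ref{lem: d>2} already disposes of the former. For $d\geq 3$ we have $\widetilde{\mathcal{F}}_d = \mathcal{F}_d$, so the desired Born-rule expression and the bijection with $\mathcal{S}(\mathcal{H}_d)$ follow immediately from Gleason's theorem (Thm.~\ref{thm: GT}). The entire content of the proof therefore lies in the qubit case $d=2$, which I would handle by embedding a qubit $A$ into a composite system $AB$ of dimension $dd' \geq 3$ (taking $d'=2$ already suffices, giving dimension $4$) and pulling back Gleason's theorem from the composite level.

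The key steps for $d=2$ would run as follows. First, take any marginal frame function $f\in\widetilde{\mathcal{F}}_2^A$. By Definition~\ref{def: CC frame function} applied with, say, $B$ a qubit, there exists $F\in\mathcal{F}_4^{AB}$ with $f(P_x^A) = F(P_x^A \otimes I_2^B)$ for all $P_x^A\in\mathcal{P}(\mathcal{H}_2^A)$. Since $\dim \mathcal{H}^{AB} = 4 \geq 3$, Gleason's theorem gives a density matrix $\rho_F\in\mathcal{S}(\mathcal{H}_4^{AB})$ with $F(P_y^{AB}) = \mathrm{Tr}(P_y^{AB}\rho_F)$. Substituting into the marginality relation yields $f(P_x^A) = \mathrm{Tr}\!\bigl[(P_x^A\otimes I_2^B)\rho_F\bigr] = \mathrm{Tr}\!\bigl[P_x^A\,\mathrm{Tr}_B(\rho_F)\bigr]$ for all one-dimensional (and hence all) projectors $P_x^A$. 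Setting $\rho := \mathrm{Tr}_B(\rho_F)$, which is non-negative with unit trace because the partial trace preserves both properties, we obtain $f(P_x) = \mathrm{Tr}(P_x\rho)$, establishing the Born-rule form. Conversely, every $\rho\in\mathcal{S}(\mathcal{H}_2)$ gives a marginal frame function: it defines a frame function via the trace, and it is marginal because for any $B$ one can take $\rho_F = \rho\otimes\tau$ with $\tau$ any state of $B$ (so $\mathrm{Tr}_B(\rho_F)=\rho$), and the resulting $F(\cdot)=\mathrm{Tr}(\cdot\,\rho_F)$ lies in $\mathcal{F}_4^{AB}$ and satisfies \eqref{cons eq with CC}. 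Finally, the correspondence $\rho\mapsto f$ is injective since $\mathrm{Tr}(P_x\rho)=\mathrm{Tr}(P_x\rho')$ for all rank-one projectors forces $\rho=\rho'$; combined with surjectivity just shown, this gives $\widetilde{\mathcal{F}}_2\cong\mathcal{S}(\mathcal{H}_2)$.

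The main obstacle — and the only step that is not a routine bookkeeping exercise — is the implication in \eqref{equal traces}, namely that $\mathrm{Tr}[(P_x^A\otimes I^B)\rho_F] = \mathrm{Tr}(P_x^A\,\mathrm{Tr}_B\rho_F)$ and that matching these traces for all $P_x^A$ pins down the reduced state. This is precisely the defining property of the partial trace, already invoked and justified in the proof of Lemma~\ref{lem: d>2}, so I would simply cite that reasoning rather than repeat it. One subtlety worth a sentence: Definition~\ref{def: CC frame function} requires marginality with respect to \emph{every} ancilla $B$, but for the forward direction we only need the existence of \emph{one} suitable $F$ for \emph{one} choice of $B$ (a qubit), which the definition certainly supplies; the universal quantifier over $B$ is only used — trivially — in the converse direction, where $\rho\otimes\tau$ works uniformly. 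Thus no genuine new difficulty arises beyond what Lemma~\ref{lem: d>2} already resolved, and the theorem follows by assembling these pieces.
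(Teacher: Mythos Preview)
Your proposal is correct and follows essentially the same route as the paper: invoke Lemma~\ref{lem: d>2} for $d\geq 3$, and for $d=2$ embed the qubit in a composite system of dimension $2d'\geq 4$, apply Gleason's theorem there, and recover the qubit density matrix via the partial trace. Your treatment is in fact slightly more explicit than the paper's in spelling out the converse direction (via $\rho\otimes\tau$), the injectivity argument, and the observation that only the existential part of the universal quantifier over $B$ in Definition~\ref{def: CC frame function} is needed for the forward direction.
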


\begin{proof}
    Lemma \ref{lem: d>2} implies the claim for dimensions $d \geq 3$.

If $d=2$, we consider a composite system with Hilbert space $\mathcal{H}^{AB}_{2d^\prime}$, for any fixed $d^{\prime}\geq 2$. Theorem \ref{thm: GT} applies to this system since $2d^{\prime}>2$: hence, we can express each frame function $F \in \mathcal{F}^{AB}_{2d^\prime}$ of the composite system as $F(P_x^{AB}) = \text{Tr}(P_x^{AB} \rho_F)$, for some suitable density matrix $\rho_F \in \mathcal{S}(\mathcal{H}^{AB}_{2d^\prime})$. The marginality condition \eqref{cons eq with CC} of the qubit frame function $f\in \widetilde{\mathcal{F}}_2$ becomes 
\begin{equation} \label{partialtrace}
    f(P_x^A) = \text{Tr}[(P_x^A \otimes I^B) \, \rho_F] = \text{Tr}[P_x^A \, \text{Tr}_B(\rho_F)]\, , \quad  \text{ for any } P_x^A \in \mathcal{P}(\mathcal{H}^A_2)\,, 
\end{equation} 
providing a representation in terms of the $2\times 2$ matrix $\rho_f \equiv \text{Tr}_B (\rho_F)$. Tracing out system $B$  from a density matrix of a bi-partite systems with Hilbert space $\mathcal{H}^{AB}_{2d^\prime}=\mathcal{H}^A_2\otimes \mathcal{H}^B_{d^\prime}$ results in a non-negative matrix with unit trace of order $2$. \textit{All} such matrices can arise in this way, hence $\mathcal{S}(\mathcal{H}_2) \cong \widetilde{\mathcal{F}}_2  \subset \mathcal{F}_2$.
\end{proof}

\section{\modification{Generalizations}} \label{implications}
\textcolor{modification}{Gleason's theorem remains valid if \textit{real} instead of complex Hilbert spaces of dimension $d \geq 3$ are considered \cite{gleason1957measures}. We have shown that the theorem extends to a \textit{two-dimensional} complex Hilbert space when invoking the standard description of composite quantum systems. Does this extension of Gleason's theorem also hold for Hilbert spaces defined over other number fields?} 

\textcolor{modification}{In this section, we investigate the ``robustness" of the argument leading to Thm. \ref{thm: our GTT}. To begin, we will indeed revisit the derivation of Thm. \ref{thm: our GTT} for real and quaternionic Hilbert spaces. We consider two other structural changes: modifications of the tensor product used to describe composite systems on the one hand, and alternative state-update rules on the other. In doing so, we effectively discuss whether the extended Gleason theorem also applies to so-called \textit{foil theories} of quantum mechanics  \cite{chiribella_quantum_2016}.}
%, often used to investigate its defining features.}
%Accordingly, Postulate (C) of Sec. \ref{compositesystems} states that a quantum system consisting of two subsystems with dimensions $d$ and $d^{\prime}$, respectively, is \textit{operationally indistinguishable} from a non-composite system of dimension $d d^{\prime}$.

\subsection{\modification{Alternative number fields}} \label{sec:real_qt}
\textcolor{modification}{
%Consider the version of quantum theory in which Hilbert spaces are defined over the real rather than the complex field. 
%Gleason's theorem continues to hold for Hilbert spaces of dimension $d \geq 3$ when defined over the field of real rather than complex numbers \cite{gleason1957measures}. 
Replacing the set $\mathbb{C}$ of complex numbers by the set of real numbers $\mathbb{R}$ in Postulate (M) and the word ``complex'' with ``real'' in Postulates (H) and (C) results in a foil theory known as \textit{real quantum theory} (RQT). This theory differs structurally from standard quantum theory: it is not locally tomographic \cite{wootters1990local}, does not respect monogamy of entanglement \cite{wootters2012entanglement} and, in certain network scenarios \cite{renou2021quantum}, cannot reproduce observed quantum correlations}.\footnote{\textcolor{modification}{These properties have led to the claim that quantum theory necessarily requires complex numbers \cite{hoffreumon2025quantum, hita2025quantum, ying2025whether}. Some authors argue that this conclusion is not warranted by ruling out a single toy theory such as RQT. Its shortcomings could be resolved by suitably redefining Postulate (C) used to describe composite systems in RQT \cite{hita2025quantum}.}}

\textcolor{modification}{These differences nonwithstanding, the derivation leading to the extended Gleason theorem of Sec.\ \ref{MarginalFramFunctions} remains valid. The lack of local tomography in RQT does not cause a problem since composite systems remain operationally equivalent to single systems of suitable dimension, even if some measurements cannot be interpreted as being performed locally. As before, composition compatibility (CC) fixes the form of the map $\Phi$ introduced in \ Eq.\ \eqref{CC in math form}, and the proofs of Lemma \ref{lem: d>2} and Thm.\ \ref{thm: our GTT} carry over.}

\textcolor{modification}{As in standard quantum theory, the partial trace is the unique mapping $\phi$ (cf.\ Eq.\ \ref{eq: gamma fcn (arbitrary g)}) that provides the induced preparation of a subsystem. Thus, the concept of marginal frame functions (see Def.\ \ref{def: CC frame function}) persists, allowing us to extend Gleason's theorem not only to standard qubits but also to \textit{rebits}, their equivalent in real Hilbert-space quantum theories.}

%\subsection{\modification{Gleason's theorem for quaterbits}}

\modification{\textit{Quaternionic quantum theory} (QQT) is obtained by associating quantum systems with Hilbert spaces defined over the field $\mathbb{H}$ of quaternions \cite{finkelstein1962foundations}. When using a suitably defined \textit{real trace} rather than the standard trace \cite{moretti2018correct}, Gleason's theorem is found to hold for Hilbert spaces over $\mathbb{C}$, $\mathbb{R}$, and $\mathbb{H}$ of dimensions $d \geq 3$. This version reduces to the standard formulation of Thm.\ \ref{thm: GT} for real and complex Hilbert spaces.}

\modification{The composition of quaternionic systems is not straightforward since arbitrary quaternions do not commute. As a consequence, the standard tensor product \cite{finkelstein1962foundations, araki1980characterization} is not linear over quaternions. It has been argued that this property rules out \textit{local} operations in the quaternionic setting and, more generally,  that it prevents one from speaking of truly independent systems in QQT \cite{finkelstein1962foundations, graydon2013quaternionic}.}

\modification{To resolve this problem, modifications of the tensor product defining composite quaternionic quantum systems have been proposed \cite{ghiloni2017spectral, razon1992uniqueness}. They do not, however, seem to admit an analogue of Postulate (C) which underpins our extension of Gleason's theorem. Thus, the difficulty to consistently define composite systems appears to obstruct the extension of our result to QQT.}%
\subsection{\modification{Alternative sets of observables}} \label{sec:alternative_tensor_products}
\textcolor{modification}{Quantum theory uses a specific tensor product that determines both the set of \textit{states} in a composite quantum system and the set of \textit{measurements} that can be performed on them. Postulates (C) and (M) encode this choice which, ultimateley, reflects experimental evidence. Foil theories based on other tensor products \cite{chiribella_quantum_2016} have been considered to understand the particular features of the one apparently realized by Nature.} 
%leading to foil theories in which composite systems are not operationally equivalent to certain individual systems. This, in fact, is not a principle that must hold \textit{a priori}:for instance, throwing two fair dice is not the same as throwing a single fair 12-sided die

\textcolor{modification}{In the context of Gleason's theorem, quantum states are associated with consistent probability assignments to measurement outcomes for all observables. Alternative tensor products for composite systems come with modified sets of observables which may have consequences for the derivation of the original Gleason theorem and its extension to qubits. We will consider two different sets of measurements exclusively consisting of \textit{products} of projectors, a choice motivated by the fact that Lemma \ref{lem: d>2} and Thm.\ \ref{thm: our GTT} nothing but refer to operators of the form $P_x^A \otimes I^B$, etc.} 

%Using results from \cite{wallach2002unentangled}, we show that they do not.}

\textcolor{modification}{Consider a foil theory in which experimenters are able to carry out only \textit{local} projective measurements. For a bi-partite system, the set of observables is then given by}
\begin{equation} 
    \textcolor{modification}{
    \mathsf{M}^{AB}_{\text{local}} =  \{ \{ P_i^A \otimes P_j^B \}_{i,j} :  \{ P_i^A  \}_{i} \in \mathsf{M}^A , \,  \{  P_j^B \}_{j} \in \mathsf{M}^B \}\,,  \label{localmmts}}
\end{equation}
\textcolor{modification}{hence not containing projectors on pure entangled states such as Bell states, for example. Due to the absence of projectors that cannot be written as a product with respect to the given bi-partition, this set is strictly smaller than that of standard quantum theory, $\mathsf{M}^{AB}_{\text{local}} \subset \mathsf{M}^{AB}$. The resulting \textit{minimal} tensor product of observables has been used, for example, to investigate quantum correlations in a scenario that involves only  ``locally quantum" experimenters \cite{barnum_2010_local}.}

\textcolor{modification}{Clearly, Gleason's theorem continues to apply (for $d>2$) to the factors $\mathcal{H}^A$ and $\mathcal{H}^B$ of the space $\mathcal{H}^A \otimes \mathcal{H}^B$ if considered on their own. Nevertheless, Proposition\ 5 in \cite{wallach2002unentangled} shows that the drastically reduced set of observables $\mathsf{M}^{AB}_{\text{local}}$ does not imply Gleason's theorem for the \textit{composite} system as a whole. There exist probability assignments to local measurements that correspond to non-quantum frame functions respecting $\mathsf{M}^{AB}_{\text{local}}$. %Intuitively, these assignments become possible because the bases associated with local measurements only intertwine within each factor but not ``globally."
} 

\textcolor{modification}{Next, we consider the largest set of observables containing only products of projectors,} 
\begin{equation}
    \textcolor{modification}{
    \mathsf{M}^{AB}_{\text{unent}} = \{ \{P_k\}_k \in \mathsf{M}^{AB} : \, P_k = P_k^A \otimes P_k^B \} \,.\label{unentangledmmts}}
\end{equation} 
\textcolor{modification}{It includes ``unentangled" observables that cannot be implemented locally \cite{bennett_1999_quantum} since they do not result from the direct tensor product of a pair of bases for the spaces $\mathcal{H}^A$ and $\mathcal{H}^B$, respectively. For two qubits, the set}
\begin{equation}
\textcolor{modification}{
    \{ \ket{0}\otimes\ket{0},\, \ket{0}\otimes\ket{1}, \,\ket{1}\otimes\ket{+},\, \ket{1}\otimes\ket{-} \}\,, \qquad \ket{\pm}=(\ket{0}\pm\ket{1})/\sqrt{2}\,, 
}
\end{equation}
\textcolor{modification}{represents a simple example of such a \textit{twisted} or \textit{indirect} orthonormal basis of states for the space $\mathbb{C}^2 \otimes \mathbb{C}^2$, constructed from more than two orthonormal bases. The set of unentangled PVMs clearly sits between the minimal and the standard quantum mechanical tensor product of observables,}
\begin{equation}
\textcolor{modification}{
    \mathsf{M}^{AB}_{\text{local}} \subset \mathsf{M}^{AB}_{\text{unent}} \subset \mathsf{M}^{AB}\,.}
\end{equation}
\textcolor{modification}{Unentangled product bases play a role in the classification of mutually unbiased bases \cite{mcnulty2012all, wiesniak2011entanglement}, for example.} 

\textcolor{modification}{Is it sufficient to define frame functions on the set of unentangled PVMs on composite systems to derive Born's rule and standard density matrices? Wallach has shown (see Thm.~2 of \cite{wallach2002unentangled}) that Gleason's theorem remains valid when considering this  considerably smaller set of observables, as long as each subsystem has a dimension of at least $d=3$. It turns out that it is not possible to apply our extension to Wallach's generalisation of Thm.\ \ref{thm: GT} for composite systems that contain at least one qubit Hilbert space. According to Thm.\ 3 of  \cite{wallach2002unentangled}, the product frame function $F( P_x^A \otimes P_y^B) \equiv f(P_x^A) g(P_y^B)$ respects $\mathsf{M}^{AB}_{\text{unent}}$ whenever $f \in \mathcal{F}_A$ and $g \in \mathcal{F}_B$ are frame functions associated with the subsystems $A$ and $B$, respectively. If $d_A =2$, one can use a non-quantum qubit frame function $f$  to define a frame function $F\in \mathcal{F}_{AB}$ \textit{not} associated with a proper quantum mechanical density matrix for the composite system. Trivially, the \textit{marginal} qubit frame function $f \in \mathcal{F}_2$ stemming from this non-quantum frame function $F\in \mathcal{F}_{AB}$ is also non-quantum, in view of Def.\ \ref{def: CC frame function}.}

\subsection{\modification{Alternative state-update rules} \label{sec:alternative_update_rules}}

\modification{Derivations of Gleason's theorem consider the probabilities of measurement outcomes as primitives. No assumptions are being made about the procedure(s) to access them experimentally. To establish this link in quantum theory, a more comprehensive Postulate (M) specifies the post-measurement states of projective measurements, known as L\"uders rule,}
\begin{equation}
    \modification{\rho \, \stackrel{x}{\longmapsto} \, P_x \, \rho \, P_x \, .}
\end{equation}
\modification{Having obtained the measurement outcome $x$ represented by $P_x$, the post-measurement state ``collapses" into the state associated with the outcome. Consequently, the outcome probabilities can only be obtained experimentally as limits of frequencies of repeated measurements on an ensemble of identically prepared systems.}

\modification{However, with the L\"uders rule being irrelevant for the derivation of Gleason's theorem (and its extension), the theorem will also hold in any foil theory that differs from quantum theory by nothing but an alternative state-update rule. Hypothetical theories of that type have been constructed in \cite{fiorentino_beyond_2025}, based on an operationally motivated notion of an \textit{update rule}. Among other requirements, acceptable update rules must assign post-measurement states consistently to both single and composite systems, and they must not allow superluminal signalling. The L\"uders projection satisfies all the requirements but, importantly, other unconventional update rules exist \cite{fiorentino_beyond_2025}. In \textit{passive quantum theory} \cite{fiorentino_quantum_2023}, for example, measurements do not alter the state of a system although the outcomes still occur probabilistically according to Born's rule. Importantly, the assumptions necessary to derive Gleason's theorem (and its extension) continue to hold in generalized state-update theories: they possess Hilbert-space representations with measurements given by PVMs, and system composition is defined by the Postulate (C).}

\section{\modification{Conclusions}} \label{conclusions}
\subsection{\modification{Summary}} \label{summary}

We have extended Gleason's theorem to the Hilbert space $\mathcal{H}_2$ by invoking only projective measurements. Instead of enlarging the types of measurements, we exploit the properties of measurements carried out on subsystems of a composite system.  

Local measurements on composite systems are part and parcel of standard quantum theory---the theory is actually incomplete without an axiom telling us how to describe multi-partite systems (cf.\ Postulate (C)). Considering bi-partite systems, we obtain consistency conditions for frame functions across different dimensions. In physical terms, the conditions express the requirement that the statistics of measurement outcomes cannot depend on whether a quantum system is considered on its own or as part of a larger composite system. 

Frame functions are said to be \textit{marginal} if they satisfy the consistency condition. For dimensions $d\geq 3$, all frame functions associated with projective measurements are found to be marginal by default. In dimension $d=2$, however, the constraint reduces the larger set of all frame functions (which contains \textit{non-quantum} probability assignments) to those (and only those) that admit a representation in terms of a density matrix. Thus, we recover the state space and the Born rule for probabilities for all finite-dimensional quantum systems including qubits, i.e.\ with dimensions $d \geq 2$. 

\modification{This argument extends beyond standard quantum theory. In real quantum theory, where composite systems are defined by the standard tensor product, the extension of Thm.\ \ref{thm: our GTT} still holds, even though the theory differs from standard quantum theory in composite scenarios.  Being independent of sequential measurements, the extended theorem remains valid upon modifications of the quantum  mechanical state-update. In contrast, altering the structure of the tensor product does affect our argument, even in scenarios where Gleason-type results can still be derived---unentangled PVMs suffice to recover density matrices for most composite systems but not to support our extension.
}

\subsection{\modification{Discussion}} \label{sum&disc}

Our extension of Gleason's theorem \modification{is of interest for at least four reasons. The derivation (i) is based exclusively on projection-valued measurements, (ii) relies on an additional assumption that is part of a comprehensive axiomatisation of quantum theory anyway, (iii) underlines the important role of composite systems for the structure of quantum theory, and (iv) opens up new perspectives regarding the reconstruction programme of quantum theory.} 

%\modification{($i$)} %Earlier \textit{Gleason-type} theorems were based on probability assignments to the outcomes associated with sets of measurements \textit{larger} than projective ones. Examples of such sets are the elements of \textit{positive operator-valued measurements} (POVMs) \cite{busch2003quantum, caves2004gleason} or \textit{projective-simulable} measurements \cite{wright2019gleason}. 
%One may argue, however, that the justification for PVMs is \textit{logically} stronger: complete sets of mutually orthogonal projectors generate a Boolean lattice of events, in analogy to the event structure in classical probability theory \cite{hall2016comment, de2016gleason}. In contrast, POVM elements also represent disjoint outcomes but do not form a Boolean lattice.
First, one may argue that the justification for PVMs \modification{in deriving Gleason's theorem} is \textit{logically} stronger \textcolor{modification}{than that for POVMs. Complete sets of mutually orthogonal projectors generate a Boolean lattice of events, so the additivity of probability over mutually exclusive outcomes follows directly from the logical structure. Mutual exclusiveness also reflects the physical interpretation of PVMs: prior to measurement, each projector corresponds to a property that the system either possesses or does not, linking outcomes to definite yes-no propositions. %In addition, PVMs are sharp and repeatable, so that in principle repeated measurements yield the same outcome (although sequential measurements are beyond the scope of this paper). 
POVM elements also represent disjoint outcomes but do not form a Boolean lattice, i.e.\ they do not represent mutually exclusive properties. In a sense, POVMs are justified primarily on operational grounds: they capture all possible quantum measurements that can be realised physically when using ancillary systems and projective measurements acting on them. In this sense, PVMs can be viewed as a more primitive assumption.}

Second, an important feature of Thm.\ \ref{thm: our GTT} is the fact that its derivation \textcolor{modification}{directly follows from} another standard postulate of quantum mechanics.\footnote{An earlier attempt to extend Gleason's theorem to a two-dimensional Hilbert space without invoking non-projective measurements was proposed in \cite{de2016gleason}. However, the approach turned out to be flawed \cite{hall2016comment}.} The essential ingredient is the relation between the Hilbert space of a composite system and those of its constituents, i.e.\ Postulate (C). 

Composite systems have been studied before in the context of Gleason's theorem. 
\modification{As mentioned in Sec.\ \ref{sec:alternative_tensor_products},} an \textit{unentangled} variant of the theorem for multi-partite systems with system dimensions greater than two was proven in \cite{wallach2002unentangled}. More recently, Gleason's theorem has been shown to hold for the composition of two or more systems to which the theorem is known to apply separately but not jointly \cite{frembs2023gleason}. However, these approaches did not examine implications for the frame functions of a single qubit. 

A simplified derivation of Born's rule was given in \cite{logiurato2012born}, assuming from the outset that quantum states correspond to rays in Hilbert space. The proof for dimension $d \geq 3$ can be extended to $d=2$ by considering a qubit as part of a composite system. However, this extension only applies to composite systems that include a constituent of dimension $d \geq 3$. In contrast, our approach \textit{derives} density matrices for $d=2$ and applies to \textit{any} composite system with a qubit subsystem.

Third, let us \modification{add} a general remark concerning the importance of composite systems for the overall structure of quantum theory.\footnote{See also Sec.\ II of \cite{vollbrecht2000two}.} The notion of complete positivity \cite{davies1976quantum} only makes sense in multi-partite systems.\textcolor{modification}{\footnote{\textcolor{modification}{Although the Kraus representation characterises complete positivity at the single-system level, the justification for excluding linear maps that do not allow a Kraus representation (e.g.\ the partial transpose) is inherently multi-partite. A map $\omega$ acting on system $A$ must remain positive when extended as $\omega \otimes \mathcal{I}^B$ to any ancillary system $B$, where $\mathcal{I}^B$ is the identity channel on $\mathcal{H}^B$.}}} Hidden variable models exist for a single qubit but local ones can be ruled out when considering composite systems \cite{bell1964einstein, bell1966problem}. It is also known \cite{auffeves2020deriving} that the only acceptable probability assignments to orthogonal projections in any two-dimensional subspace of $\mathbb{C}^3$ are those stemming from valid quantum mechanical density matrices. However, this property does not prevent one from assigning non-quantum probabilities to projections in the space $\mathbb{C}^2$ considered on its own. 

Studies of foil theories with composition rules differing from the tensor product reveal the extent to which the properties of quantum systems depend on the standard choice \cite{janotta2014generalized}. Similarly, our proof of Thm.\ \ref{thm: our GTT}, which extends Gleason's theorem to $d=2$, relies entirely on the description of composite quantum systems in terms of the tensor product. Thus, our result illustrates in yet another way how important this composition rule is for the structure of quantum theory as a whole.

%\modification{The subtle relations of the postulates of quantum theory can be studied by investigating the properties of alternatives theory obtained by modifying its axiomatic structure. In addition to the cases  considered in Sec.~\ref{implications} Foil theories \cite{chiribella_quantum_2016}.  with nonlinear time evolution \cite{weinberg_testing_1989, gisin_weinbergs_1990, helou_extensions_2017} represent another important example.} 
\modification{Finally, extending Gleason's theorem plays a role in the reconstruction programme of quantum theory \cite{hardy_quantum_2001, grinbaum2007reconstruction, chiribella_informational_2011,masanes_derivation_2011, jaeger2019information}. Its goal is to make characteristic quantum features such as the L\"uders rule plausible, if not to derive the entire theory through natural physical assumptions.}

\modification{More specifically, the extended theorem allows one to derive the structure of the state space and the Born rule for systems of \textit{arbitrary} dimensions, simply by combining the postulates for system composition (C) and (projective) measurements (M). Next, the unitarity of quantum dynamics is recovered from the Hilbert-space setting via Wigner-Uhlhorn theorems \cite{wigner1931gruppentheorie, uhlhorn_1963_representation} or alternative approaches \cite{wilson_origin_2023}. Once unitaries have been established %(being required to formulate some consistency conditions for update rules), 
the L\"uders rule can in turn be derived following arguments given in Ref.\ \cite{fiorentino_beyond_2025}, the framework of theories with generalized state-update rules described in Sec. \ref{sec:alternative_update_rules}. 
}

\section*{Acknowledgements}
We would like to thank M.\ J.\ W.\ Hall, L.\ Loveridge and T.\ Vetterlein for their constructive comments on a draft of this paper. \modification{We also thank an anonymous referee for suggesting to explore generalizations of Thm.\ \ref{thm: our GTT}.} This work was supported by the Leverhulme Trust, grant number RPG-2024-201. 

\printbibliography

\end{document}